\newcounter{NN}
\newtheorem{theorem}[NN]{Theorem}
\newtheorem{lemma}[NN]{Lemma}
\def\N{\mathbb{N}}
\def\A{\mathbf{A}}
\def\B{\mathbf{B}}
\def\Z{\mathbf{Z}}
\begin{document}

\title{Linear Darboux polynomials for Lotka-Volterra systems, trees and superintegrable families}

\author{
G.R.W. Quispel$^1$, Benjamin K. Tapley$^2$, D.I. McLaren$^1$, Peter H.~van der Kamp$^1$
\\[2mm]
$^1$Department of Mathematical and Physical Sciences,\\ La Trobe University, Victoria 3086, Australia.\\
$^2$Department of Mathematics and Cybernetics, SINTEF Digital, 0373 Oslo, Norway.\\[2mm]
Email: P.vanderKamp@LaTrobe.edu.au\\[7mm]
}
\maketitle

\begin{abstract}
We present a method to construct superintegrable $n$-component Lotka-Volterra systems with $3n-2$ parameters. We apply the method to Lotka-Volterra systems with $n$ components for $1<n<6$, and present several $n$-dimensional superintegrable families. The Lotka-Volterra systems are in one-to-one correspondence with trees on $n$ vertices.
\end{abstract}

\section{Introduction}
The original 2-dimensional Lotka-Volterra (LV) system,
\begin{equation} \label{OLV}
\dot{x}=x(a-by), \qquad \dot{y}=y(-c+dx),
\end{equation}
where $\dot{x}$ denotes the derivative with respect to time, was derived as a model to describe the interaction between predator and prey fish \cite{Lot,Vol,DP}. Sternberg \cite[Chapter 11]{Ste} gives a dynamical systems perspective and an explanation why fishing decreases the number of predators. The 2-dimensional system (\ref{OLV}) has been generalised to $n$-dimensional systems of the form
\begin{equation} \label{GLV}
\dot{x}_i = x_i \left(b_i + \sum_{i}^{} A_{i,j} x_j \right),
\end{equation}
where $\mathbf{b}$ is a real vector, and $\mathbf{A}$ is a real matrix, and these have been studied extensively. For references on various aspects of LV systems, including integrability as well as their history, see \cite{BBM,Bog,BV,BZDK,CD,Dam,DEKV,DP,EKV,HBF,Ito,KQV,Pla}. Prelle and Singer wrote a very influential paper \cite{PS} proving that if a polynomial ODE has an elementary integral, then it has a logarithmic integral. Note that in the mathematical physics literature the matrix $\A$ is often assumed to be skew symmetric. This is not assumed here.

A vector field on an $n$-dimensional manifold is called {\it superintegrable} if it admits $n-1$ functionally independent constants of motion (i.e. first integrals), cf. \cite{KKQTV}.  In this paper we construct superintegrable $n$-component Lotka-Volterra systems with $3n-2$ parameters.

Darboux polynomials (DPs) are building blocks of rational integrals and their generalizations \cite{Darb,Gor}. Given an ordinary differential equation (ODE)
\[
\frac{d\mathbf{x}}{dt}=\mathbf{f}(\mathbf{x}),
\]
where $\mathbf{x}(t)$ and $\mathbf{f}$ are $n$-dimensional vectors, a Darboux polynomial $P(\mathbf{x})$ is defined by the existence of a polynomial $C(\mathbf{x})$ s.t.
\begin{equation}\label{defDP}
\frac{dP(\mathbf{x})}{dt} = C(\mathbf{x})P(\mathbf{x})
\end{equation}
Note that (\ref{defDP}) implies that if $P(\mathbf{x}(0)) = 0$, then $P(\mathbf{x}(t)) = 0, \forall t$. For this reason Darboux polynomials are also called second integrals.

In section 2, we provide a method to obtain $m$ integrals for an $n$-dimensional homogeneous quadratic ODE, from $m+n$ Darboux polynomials.
In section 3, we give conditions on $\mathbf{b}$ and $\mathbf{A}$ which are equivalent to
\[
P_{i,k}=\alpha x_i +\beta x_k
\]
being a DP for \eqref{GLV}. In section 4, we look at the intersection of the above two classes, i.e. at homogeneous Lotka-Volterra systems, and use the described method and mentioned DPs to construct some superintegrable systems in dimensions 2, 3, and 4. In section 5, we explain how these superintegrable $n$-dimensional LV systems are in one-to-one correspondence with trees on $n$ vertices. Such a tree has $n-1$ edges, and each of these edges corresponds to an integral. If an edge exists between vertices $i$ and $k$, the corresponding integral can be written as a product of $P_{i,k}$ and powers of the variables $x_j$, $j=1\ldots n$.
In section 6, we cover the superintegrable LV-systems which relate to the 3 non-isomorphic trees on 5 vertices. We also describe the factorisation of the exponents of the variables in terms of minors of the matrix $\mathbf{A}$.
In our final section we give some details for the superintegrable $n$-dimensional LV systems that relate to tall trees. In the appendix we explain how the Euler top relates to a special case of our superintegrable 3-dimensional LV system.

\section{A rather general method} \label{rgm}
Let
\[
\frac{dP_1}{dt} = C_1P_1, \qquad
\frac{dP_2}{dt} = C_2P_2
\]
then
\[
\frac{d}{dt}\left(P_1^{\alpha_1}P_2^{\alpha_2} \right) = (\alpha_1C_1 + \alpha_2C_2) P_1^{\alpha_1}P_2^{\alpha_2}.
\]
Hence cofactors $C_i$ form a linear space. Note that $C_1=C_2$ if and only if $\frac{P_1}{P_2}$ is an integral. We also have
 \begin{equation}
 P_1^{\alpha_1}P_2^{\alpha_2} \mbox{ is a first integral } \Leftrightarrow \alpha_1C_1 + \alpha_2C_2 = 0,
\end{equation}
and more generally
 \begin{equation}
\prod_{i}^{}P_i^{\alpha_i} \mbox{ is a first integral } \Leftrightarrow \sum_{i}^{}\alpha_iC_i  = 0.
\end{equation}
It follows that integrals that arise in this way are factorisable.

\medskip \noindent If there are more functionally independent DPs than the dimension of this linear space, then there must be one or more integrals.
The method we introduce here, produces $m$ integrals for an $n$-dimensional homogeneous quadratic ODE, from $n+m$ Darboux polynomials.
\begin{itemize}
\item Find $n$ independent DPs for the ODE:
\begin{equation}\label{Pdot}
\dot{P}_i(\mathbf{x}) = P_i(\mathbf{x})C_i(\mathbf{x}).
\end{equation}
The $C_i$ will be linear. Defining $\mathbf{v}$ to be the vector with components $v_i:= \ln(P_i)$, $i=1,\dots,n$, the equation (\ref{Pdot}) can be written as
\begin{equation}\label{vdot}
    \dot{\mathbf{v}} = \mathbf{A}\mathbf{x}
\end{equation}
where $\mathbf{A}$ is some constant invertible matrix.
\item Find $m$ additional DPs for the ODE ($m \leq n-1$ is a necessary condition for the integrals to be independent). Defining $\mathbf{w}$ to be the vector with components $w_i:= \ln(P_i)$, $i=n+1,\dots,n+m$, we get
\begin{equation}\label{wdot}
\dot{\mathbf{w}} = \mathbf{B}\mathbf{x}
\end{equation}
Eliminating $\mathbf{x}$, we again get
\begin{equation}\label{nDLV}
\dot{\mathbf{w}} - \mathbf{B}\mathbf{A}^{-1}\dot{\mathbf{v}} = 0 \rightarrow \mathbf{w} - \mathbf{B}\mathbf{A}^{-1}\mathbf{v} = \mathbf{I}.
\end{equation}
\end{itemize}

For $n$-component Lotka-Volterra (LV) systems, $n$ Darboux polynomials are given by the components of the vector $\mathbf{x}$, and we set $\mathbf{v}=\mathbf{x}$. From (\ref{nDLV}), by exponentiation of the logarithmic integrals $\mathbf{I}$, we obtain $m$ integrals of the form
\[
P_{n+i}^{|\mathbf{A}|}\prod_{j=1}^n x_j^{Z_{i,j}},\qquad i=1,\ldots,m,
\]
where
\begin{equation}\label{defZ}
\mathbf{Z}:=-\mathbf{B}\mathbf{A}^{-1}|\mathbf{A}|
\end{equation}
and $|\mathbf{A}|$ is the determinant of $\mathbf{A}$.

\section{Additional Darboux polynomials for Lotka-Volterra systems}

The complement of $\{i,k\}$ is denoted $\{i,k\}^{\rm c}:=\{1,2,\ldots,n\}\setminus\{i,k\}$.
\begin{lemma} \label{Leminhom}
Consider a system with
\begin{equation} \label{eik}
\dot{x}_i= x_i \left( b_i + \sum_{j=1}^nA_{i,j}x_j \right),\qquad
\dot{x}_k= x_k \left( b_k + \sum_{j=1}^nA_{k,j}x_j \right).
\end{equation}
The expression, with $\alpha\beta\neq0$,
\begin{equation} \label{DPadd}
P_{i,k} = \alpha x_i + \beta x_k,
\end{equation}
is a  DP if and only if, for some constant $b$ and all $j\in \{i,k \}^c$,
\begin{eqnarray}
A_{i,j} &=& A_{k,j}  \label{cond1} \\
b_i &=& b_k\ =\ b \label{cond2}\\
\alpha (A_{k,k}-A_{i,k}) &=& \beta (A_{k,i}-A_{i,i})\label{cond3}
\end{eqnarray}
and $(A_{k,k}-A_{i,k})(A_{k,i}-A_{i,i})\neq 0$.
\end{lemma}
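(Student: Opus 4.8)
The plan is to verify the defining relation \eqref{defDP} directly by computing $\dot{P}_{i,k}$ and matching it, monomial by monomial, against $C\,P_{i,k}$ for an undetermined cofactor $C$. Because the right-hand sides in \eqref{eik} are inhomogeneous quadratics while $P_{i,k}$ is linear, the product $C\,P_{i,k}$ can reproduce the degree-one and degree-two parts of $\dot P_{i,k}$ only if $C$ is affine; I would therefore write $C = c + \sum_{l=1}^n c_l x_l$ with undetermined constants and treat $\dot P_{i,k}=C\,P_{i,k}$ as the resulting system of coefficient equations.

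I would then run the matching systematically, grouping monomials by type. The linear terms force $c=b_i=b_k$, which is \eqref{cond2} and fixes the constant part of the cofactor. The pure squares $x_i^2$ and $x_k^2$ give $c_i=A_{i,i}$ and $c_k=A_{k,k}$. For each $j\in\{i,k\}^{\rm c}$, the monomial $x_ix_j$ yields $c_j=A_{i,j}$ and the monomial $x_kx_j$ yields $c_j=A_{k,j}$; demanding a single consistent value of $c_j$ is precisely \eqref{cond1}. The mixed monomial $x_ix_k$ gives $\alpha A_{i,k}+\beta A_{k,i}=\alpha c_k+\beta c_i=\alpha A_{k,k}+\beta A_{i,i}$, which rearranges to \eqref{cond3}. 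Finally, every $x_jx_{j'}$ with $j,j'\in\{i,k\}^{\rm c}$ has coefficient zero on both sides and imposes nothing. This settles the ``only if'' direction; the ``if'' direction is the same computation read backwards, since under \eqref{cond1}--\eqref{cond3} the constants $c,c_l$ are consistently defined and reassemble into a genuine cofactor $C$.

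The one step needing real care, and which I expect to be the crux, is the nondegeneracy factor $(A_{k,k}-A_{i,k})(A_{k,i}-A_{i,i})\neq 0$. The key observation is that, because we insist $\alpha\beta\neq0$, relation \eqref{cond3} forces $A_{k,k}-A_{i,k}$ and $A_{k,i}-A_{i,i}$ to vanish together or not at all: if exactly one were zero, \eqref{cond3} would force the corresponding coefficient ($\beta$ or $\alpha$) to vanish, contradicting $\alpha\beta\neq0$. So the product is either nonzero, in which case \eqref{cond3} pins the ratio $\alpha:\beta=(A_{k,i}-A_{i,i}):(A_{k,k}-A_{i,k})$ up to scaling and we obtain a genuinely constrained linear DP, or it is zero, in which case \eqref{cond1} together with the vanishing of both factors makes $A_{i,j}=A_{k,j}$ for \emph{every} $j$, so rows $i$ and $k$ of $\mathbf{A}$ coincide, $x_i/x_k$ is itself an integral, and $\alpha x_i+\beta x_k$ is a DP only trivially, as a combination of the monomial DPs $x_i,x_k$ which already share a cofactor. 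I would therefore present the final inequality as the condition that isolates this nontrivial linear DP and explicitly excludes the coincident-rows degeneracy.
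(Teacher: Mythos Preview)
Your proof is correct and in substance identical to the paper's: both directions are handled by expanding $\dot P_{i,k}$ and comparing with a putative affine cofactor $C$. The only organisational difference is that for the necessity direction the paper substitutes $x_k=-\tfrac{\alpha}{\beta}x_i$ into $\dot P_{i,k}$ (using that the zero set of a DP is invariant) and reads off \eqref{cond1}--\eqref{cond3} from the resulting vanishing, whereas you match monomial coefficients directly; these are equivalent bookkeeping choices, and your treatment of the nondegeneracy clause $(A_{k,k}-A_{i,k})(A_{k,i}-A_{i,i})\neq 0$ is in fact more explicit than the paper's.
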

\begin{proof}
We first show that the conditions (\ref{cond1}), (\ref{cond2}) and (\ref{cond3}) are sufficient, i.e. if they are satisfied, then $P_{i,k}$ defined by (\ref{DPadd}) is a DP for the ODE defined by (\ref{eik}). Equation (\ref{DPadd}) implies with (\ref{eik}) that
\begin{eqnarray*} \label{imps}
\alpha \dot{x}_i + \beta \dot{x}_k &=& \alpha x_i \left(b_i + \sum_{j=1}^nA_{i,j}x_j  \right) + \beta x_k \left(b_k +  \sum_{j=1}^nA_{k,j}x_j \right) \\
&=& \alpha x_i \left(b_i + A_{i,i}x_i + A_{i,k}x_k + \Sigma'  \right) + \beta x_k \left(b_k + A_{k,i}x_i + A_{k,k}x_k + \Sigma'  \right) \\
&=& (\alpha x_i + \beta x_k)b + \alpha A_{i,i}x_i^2 + (\alpha A_{i,k}+\beta A_{k,i})x_ix_k + \beta A_{k,k}x_k^2 + (\alpha x_i + \beta x_k)\Sigma' \\
& &\mbox{ using }(\ref{cond2}) \\
&=& (\alpha x_i + \beta x_k)(b + A_{i,i}x_i + A_{k,k}x_k + \Sigma') \mbox{ using }(\ref{cond3}),
\end{eqnarray*}
and where (using (\ref{cond1}))
\begin{equation}\label{Sig'}
\Sigma'  := \sum_{j\in \{i,k\}^c}^{} A_{i,j}x_j =  \sum_{j\in \{i,k\}^c}^{} A_{k,j}x_j.
\end{equation}
Next we show the conditions are necessary, i.e. if $P_{i,k}$ defined by (\ref{DPadd}) is a DP for the ODE defined by (\ref{eik}) then (\ref{cond1}), (\ref{cond2}) and (\ref{cond3}) hold. Equation (\ref{DPadd}) implies with (\ref{eik}) that
\begin{equation} \label{imps2}
\alpha \dot{x}_i + \beta \dot{x}_k = \alpha x_i \left(b_i + \sum_{j=1}^nA_{i,j}x_j  \right) + \beta x_k \left(b_k +  \sum_{j=1}^nA_{k,j}x_j \right).
\end{equation}
First consider  all terms that contain $x_j$ on the r.h.s., where $j \in\{i,k\}^{\rm c}$:
\begin{equation}
\alpha x_i A_{i,j} x_j + \beta x_k A_{k,j} x_j.
\end{equation}
This must vanish if we substitute
\begin{equation}\label{subs1}
x_k = -\frac{\alpha}{\beta}x_i.
\end{equation}
We find $\alpha (A_{i,j} - A_{k,j}) x_i x_j = 0$ and hence
\begin{equation}
A_{i,j} = A_{k,j}
\end{equation}
for all $j\in \{i,k \}^c$.

\medskip \noindent Now consider all remaining terms that do not contain any $x_j$, with $j\in\{i,k\}^{\rm c}$, i.e.
\begin{equation}\label{noxj}
\alpha x_i (b_i + A_{i,i}x_i +  A_{i,k}x_k ) + \beta x_k (b_k +  A_{k,i}x_i +  A_{k,k}x_k ).
\end{equation}
Once again (\ref{noxj}) must vanish if we substitute (\ref{subs1}).
Hence
\begin{equation}
x_i(b_i-b_k) + x_i^2 \left[ A_{i,i} - (\frac{\alpha}{\beta}A_{i,k} + A_{k,i}) + \frac{\alpha}{\beta}A_{k,k} \right] = 0,  \nonumber
\end{equation}
which implies that
\begin{equation}
b_i=b_k = b, \mbox{ say}, \nonumber
\end{equation}
and
\begin{equation}
\frac{\alpha}{\beta} = \frac{A_{i,i}-A_{k,i}}{A_{i,k}-A_{k,k}}. \nonumber
\end{equation}
\end{proof}
Of course several low-dimensional instances of Lemma 1 have appeared in papers by various authors over the years, cf. e.g. a 2D instance in equation (3.2) of \cite{HCFGL}, a 3D instance in Proposition 1\#(3) of \cite{Cairo}, and a 4D instance in equation (12) of \cite{DP}.

\section{Superintegrable $n$-component Lotka-Volterra systems, $n=2,3,4$}

\subsection{$n=2$}
The system
\begin{equation} \label{LV2}
  \left\{ \begin{array}{c}
              \dot{x}_1=x_{{1}} \left(a_{{1}}x_{{1}}+b_{{1}}x_{{2}} \right) \\
              \dot{x}_2=x_{{2}} \left(c_{{1}}x_{{1}}+a_{{2}}x_{{2}} \right)
            \end{array} \right.
\end{equation}
admits the Darboux polynomials $x_1,x_2$, with cofactors $a_{{1}}x_{{1}}+b_{{1}}x_{{2}},a_{{2}}x_{{2}}+c_{{1}}x_{{1}}$, and the Darboux polynomial $\left( c_{{1}}-a_{{1}} \right) x_{{1}}+ \left( a_{{2}}-b_{{1}}
 \right) x_{{2}}$, with cofactor $a_{{1}}x_{{1}}+a_{{2}}x_{{2}}$. They give rise to matrices
 \begin{equation}\label{A2}
 \mathbf{A}=\begin{pmatrix}a_{{1}}&b_{{1}}\\ c_{{1}}
&a_{{2}}\end{pmatrix} \text{ and } \mathbf{B}=\begin{pmatrix}a_{{1}}&a_{{2}}\end{pmatrix},
\end{equation}
and hence to the integral
\[
I=  \left(  \left( c_{{1}}-a_{{1}} \right) x_{{1}}+ \left( a_{{2}}-b_{{1}
} \right) x_{{2}} \right) ^{a_{{1}}a_{{2}}-b_{{1}}c_{{1}}}{x_{{1}}}^{-
a_{{2}} \left( a_{{1}}-c_{{1}} \right) }{x_{{2}}}^{-a_{{1}} \left( a_{
{2}}-b_{{1}} \right) }.
\]
\subsection{$n=3$}
The system
\begin{equation} \label{LV3}
\left\{ \begin{array}{c}
         \dot{x}_1=x_{{1}} \left( a_{{1}}x_{{1}}+b_{{1}}x_{{2}}+b_{{2}}x_{{3}} \right)\\
         \dot{x}_2=x_{{2}} \left( a_{{2}}x_{{2}}+b_{{2}}x_{{3}}+c_{{1}}x_{{1}} \right)\\
         \dot{x}_3=x_{{3}} \left( a_{{3}}x_{{3}}+c_{{1}}x_{{1}}+c_{{2}}x_{{2}} \right)
        \end{array}\right.
\end{equation}
relates to matrix
\begin{equation}\label{A3}
\A=\begin{pmatrix}a_{{1}}&b_{{1}}&b_{{2}} \\
c_{{1}}&a_{{2}}&b_{{2}}\\
c_{{1}}&c_{{2}}&a_{{3}}
\end{pmatrix}.
\end{equation}
The following are 2 additional Darboux polynomials:
\[
P_{1,2}=\left( c_{{1}}-a_{{1}} \right) x_{{1}}+ \left( a_{{2}}-b_{{1}}
 \right) x_{{2}},\quad P_{2,3}=\left( c_{{2}}-a_{{2}} \right) x_{{2}}+ \left( a_{{3
}}-b_{{2}} \right) x_{{3}},
\]
with cofactors
\[
C_{1,2}=a_{{1}}x_{{1}}+a_{{2}}x_{{2}}+b_{{2}}x_{{3}},\quad C_{2,3}=c_{{1}}x_{{1}}+a_{{2}}x_{{2}}+a_{{3}}x_
{{3}}.
\]
Thus we have
\[
\B=\begin{pmatrix}
a_{{1}}&a_{{2}}&b_{{2}}\\
c_{{1}}&a_{{2}}&a_{{3}},
\end{pmatrix}
\]
and we find $2=n-1$ integrals
\begin{align*}
I_1 &= \left(  \left( c_{{1}}-a_{{1}} \right) x_{{1}}+ \left( a_{{2}}-b_{{1}
} \right) x_{{2}} \right)^{|\A|}{x_{{1}}}^{- \left( a_{{2}}a_{{3}
}-b_{{2}}c_{{2}} \right)  \left( a_{{1}}-c_{{1}} \right) }{x_{{2}}}^{-
 \left( a_{{2}}-b_{{1}} \right)  \left( a_{{1}}a_{{3}}-b_{{2}}c_{{1}}
 \right) }{x_{{3}}}^{b_{{2}} \left( a_{{2}}-b_{{1}} \right)  \left( a_
{{1}}-c_{{1}} \right) },\\
I_2&=\left(  \left( c_{{2}}-a_{{2}} \right) x_{{2}}+ \left( a_{{3}}-b_{{2}
} \right) x_{{3}} \right) ^{|\A|}{x_{{1}}}^{c_{{1}} \left( a_{{3}}
-b_{{2}} \right)  \left( a_{{2}}-c_{{2}} \right) }{x_{{2}}}^{- \left(
a_{{2}}-c_{{2}} \right)  \left( a_{{1}}a_{{3}}-b_{{2}}c_{{1}} \right)
}{x_{{3}}}^{- \left( a_{{3}}-b_{{2}} \right)  \left( a_{{1}}a_{{2}}-b_
{{1}}c_{{1}} \right) }.
\end{align*}
A special case of (\ref{LV3}), where $a_1=-c_1$, $c_2=-a_2=b_1$ and $a_3=-b_2$, is linearly equivalent to the Euler top, which has an extra integral, cf. Appendix A.
\subsection{$n=4$}
\subsubsection{} The matrix
\begin{equation} \label{A4}
\A=\begin{pmatrix}
a_{{1}}&b_{{1}}&b_{{2}}&b_{{3}}\\
c_{{1}}&a_{{2}}&b_{{2}}&b_{{3}}\\ c_{{1}}&c_{{2}}&a_{{3}}&b_{{3}}\\
c_{{1}}&c_{{2}}&c_{{3}}&a_{{4}}
\end{pmatrix}
\end{equation}
has the property that $A_{i,j}=A_{i+1,j}$ for all $i\in\{1,2,3\}$ and $j\in\{i,i+1\}^{\rm c}$. The associated Lotka-Volterra system is
\begin{equation}\label{TT4}
\left\{  \begin{array}{c}
        \dot{x}_1=x_1 (a_1 x_1 + b_1 x_2 + b_2 x_3 + b_3 x_4 ) \\
        \dot{x}_2=x_2 (c_1 x_1 + a_2 x_2 + b_2 x_3 + b_ 3x_4 ) \\
        \dot{x}_3=x_3 (c_1 x_1 + c_2 x_2 + a_3 x_3 + b_3 x_4 ) \\
        \dot{x}_4=x_4 (c_1 x_1 + c_2 x_2 + c_3 x_3 + a_4 x_4 )
\end{array}\right.
\end{equation}
The system (\ref{TT4}) has 7 Darboux polynomials. The obvious ones are $P_i=x_i$, $i=1,2,3,4$, with cofactors $C_i=\sum_{j=1}^n A_{i,j}x_j$. The other three, obtained from Lemma \ref{Leminhom}, are:
\begin{align*}
P_{1,2}&=\left( c_{{1}}-a_{{1}} \right) x_{{1}} + \left( a_{{2}}-b_{{1}} \right) x_{{2}},\\
P_{2,3}&=\left( c_{{2}}-a_{{2}} \right) x_{{2}}+ \left( a_{{3}}-b_{{2}}
 \right) x_{{3}},\\
P_{3,4}&=\left( c_{{3}}-a_{{3}} \right) x_{{3}}+ \left( a_{{4}}-b_{{3}}
 \right) x_{{4}},
\end{align*}
with cofactors
\begin{align*}
C_{1,2}&=a_{{1}}x_{{1}}+a_{{2}}x_{{2}}+b_{{2}}x_{{3}}+b_{{3}}x_{{4}},\\
C_{2,3}&=c_{{1}}x_{{1}}+a_{{2}}x_{{2}}+a_{{3}}x_{{3}}+b_{{3}}x_{{4}},\\
C_{3,4}&=c_{{1}}x_{{1}}+c_{{2}}x_{{2}}+a_{{3}}x_{{3}}+a_{{4}}x_{{4}}.
\end{align*}
The coefficient matrix from these cofactors is
\[
\B=\begin{pmatrix}
a_{{1}}&a_{{2}}&b_{{2}}&b_{{3}}\\ c_{{1}}&a_{{2}}&a_{{3}}&b_{{3}}\\ c_{{1}}&c_{{2}}&a_{{3}}&a_{{4}}
\end{pmatrix}.
\]
The rather general method, introduced in section \ref{rgm}, gives rise to the following $3=n-1$ functionally independent integrals:
\[
I_i=P_{i,i+1}^{|\A|}
{x_{{1}}}^{Z_{i,1}}
{x_{{2}}}^{Z_{i,2}}
{x_{{3}}}^{Z_{i,3}}
{x_{{4}}}^{Z_{i,4}},\qquad i=1,2,3,
\]
where $I_1$ is determined by
\begin{align*}
Z_{{1,1}}&=- \left( a_{{2}}a_{{3}}a_{{4}}-a_{{2}}b_{{3}}c_{{3}}-a_{{3}}
b_{{3}}c_{{2}}-a_{{4}}b_{{2}}c_{{2}}+b_{{2}}b_{{3}}c_{{2}}+b_{{3}}c_{{
2}}c_{{3}} \right)  \left( a_{{1}}-c_{{1}} \right),\\
Z_{{1,2}}&=- \left( a_{{2}}-b_{{1}} \right)  \left( a_{{1}}a_{{3}}a_{{4
}}-a_{{1}}b_{{3}}c_{{3}}-a_{{3}}b_{{3}}c_{{1}}-a_{{4}}b_{{2}}c_{{1}}+b
_{{2}}b_{{3}}c_{{1}}+b_{{3}}c_{{1}}c_{{3}} \right),\\
Z_{{1,3}}&= \left( a_{{4}}b_{{2}}-b_{{3}}c_{{3}} \right)  \left( a_{{2}
}-b_{{1}} \right)  \left( a_{{1}}-c_{{1}} \right),\\
Z_{{1,4}}&=b_{{3}} \left( a_{{3}}-b_{{2}} \right)  \left( a_{{2}}-b_{{1
}} \right)  \left( a_{{1}}-c_{{1}} \right),
\end{align*}
$I_2$ is determined by
\begin{align*}
Z_{{2,1}}&=c_{{1}} \left( a_{{4}}-b_{{3}} \right)  \left( a_{{3}}-b_{{2
}} \right)  \left( a_{{2}}-c_{{2}} \right),\\
Z_{{2,2}}&=- \left( a_{{2}}-c_{{2}} \right)  \left( a_{{1}}a_{{3}}a_{{4
}}-a_{{1}}b_{{3}}c_{{3}}-a_{{3}}b_{{3}}c_{{1}}-a_{{4}}b_{{2}}c_{{1}}+b
_{{2}}b_{{3}}c_{{1}}+b_{{3}}c_{{1}}c_{{3}} \right),\\
Z_{{2,3}}&=- \left( a_{{3}}-b_{{2}} \right)  \left( a_{{1}}a_{{2}}a_{{4
}}-a_{{1}}b_{{3}}c_{{2}}-a_{{2}}b_{{3}}c_{{1}}-a_{{4}}b_{{1}}c_{{1}}+b
_{{1}}b_{{3}}c_{{1}}+b_{{3}}c_{{1}}c_{{2}} \right),\\
Z_{{2,4}}&=b_{{3}} \left( a_{{3}}-b_{{2}} \right)  \left( a_{{2}}-c_{{2
}} \right)  \left( a_{{1}}-c_{{1}} \right),
\end{align*}
and $I_3$ is determined by
\begin{align*}
Z_{{3,1}}&=c_{{1}} \left( a_{{4}}-b_{{3}} \right)  \left( a_{{3}}-c_{{3
}} \right)  \left( a_{{2}}-c_{{2}} \right),\\
Z_{{3,2}}&= \left( a_{{4}}-b_{{3}} \right)  \left( a_{{3}}-c_{{3}}
 \right)  \left( c_{{2}}a_{{1}}-c_{{1}}b_{{1}} \right),\\
Z_{{3,3}}&=- \left( a_{{3}}-c_{{3}} \right)  \left( a_{{1}}a_{{2}}a_{{4
}}-a_{{1}}b_{{3}}c_{{2}}-a_{{2}}b_{{3}}c_{{1}}-a_{{4}}b_{{1}}c_{{1}}+b
_{{1}}b_{{3}}c_{{1}}+b_{{3}}c_{{1}}c_{{2}} \right),\\
Z_{{3,4}}&=- \left( a_{{4}}-b_{{3}} \right)  \left( a_{{1}}a_{{2}}a_{{3
}}-a_{{1}}b_{{2}}c_{{2}}-a_{{2}}b_{{2}}c_{{1}}-a_{{3}}b_{{1}}c_{{1}}+b
_{{1}}b_{{2}}c_{{1}}+b_{{2}}c_{{1}}c_{{2}} \right).
\end{align*}

\subsubsection{}
Next we consider the matrix
\[
\A=\begin{pmatrix}
a_{{1}}&b_{{1}}&b_{{2}}&b_{{3}}\\
c_{{1}}&a_{{2}}&b_{{2}}&b_{{3}}\\
c_{{1}}&c_{{2}}&a_{{3}}&b_{{3}}\\
c_{{1}}&c_{{3}}&b_{{2}}&a_{{4}}
\end{pmatrix}.
\]
It has the property that $A_{i,j}=A_{k,j}$ for all $(i,k)\in\{(1,2),(2,3),(2,4)\}$ and $j\in\{i,k\}^{\rm c}$. The corresponding Lotka-Volterra system reads
\begin{equation}\label{A4B}
\left\{
\begin{array}{c}
        \dot{x}_1=x_1 (a_1 x_1 + b_1 x_2 + b_2 x_3 + b_3 x_4 ) \\
        \dot{x}_2=x_2 (c_1 x_1 + a_2 x_2 + b_2 x_3 + b_3 x_4 ) \\
        \dot{x}_3=x_3 (c_1 x_1 + c_2 x_2 + a_3 x_3 + b_3 x_4 ) \\
       \dot{x}_4=x_4 (c_1 x_1 + c_3 x_2 + b_2 x_3 + a_4 x_4 )
\end{array}\right.
\end{equation}
The additional Darboux polynomials are
\begin{align*}
P_{1,2}&=\left( c_{{1}}-a_{{1}} \right) x_{{1}} + \left( a_{{2}}-b_{{1}} \right) x_{{2}},\\
P_{2,3}&=\left( c_{{2}}-a_{{2}} \right) x_{{2}}+ \left( a_{{3}}-b_{{2}}
 \right) x_{{3}},\\
P_{2,4}&=\left( c_{{3}}-a_{{2}} \right) x_{{2}}+ \left( a_{{4}}-b_{{3}}
 \right) x_{{4}},
\end{align*}
with cofactors
\begin{align*}
C_{1,2}&=a_{{1}}x_{{1}}+a_{{2}}x_{{2}}+b_{{2}}x_{{3}}+b_{{3}}x_{{4}},\\
C_{2,3}&=c_{{1}}x_{{1}}+a_{{2}}x_{{2}}+a_{{3}}x_{{3}}+b_{{3}}x_{{4}},\\
C_{3,4}&=c_{{1}}x_{{1}}+a_{{2}}x_{{2}}+b_{{2}}x_{{3}}+a_{{4}}x_{{4}}.
\end{align*}
The coefficient matrix from these cofactors is
\[
\B=\begin{pmatrix}
a_{1} & a_{2} & b_{2} & b_{3}\\
c_{1} & a_{2} & a_{3} & b_{3}\\
c_{1} & a_{2} & b_{2} & a_{4}
\end{pmatrix}.
\]
We label the special pairs of indices of rows of $\mathbf{A}$ as follows,
\begin{equation} \label{e123}
e_1=(1,2),\qquad e_2=(2,3),\qquad e_3=(2,4).
\end{equation}
The same label can be used to enumerate the functionally independent integrals,
\[
I_i=P_{e_i}^{|A|}
{x_{{1}}}^{Z_{i,1}}
{x_{{2}}}^{Z_{i,2}}
{x_{{3}}}^{Z_{i,3}}
{x_{{4}}}^{Z_{i,4}},\qquad i=1,2,3,
\]
where
\begin{align*}
Z_{{1,1}}&=-\left(a_{2} a_{3} a_{4}-a_{2} b_{2} b_{3}-a_{3} b_{3} c_{3}-a_{4} b_{2} c_{2}+b_{2} b_{3} c_{2}+b_{2} b_{3} c_{3}\right) \left(a_{1}-c_{1}\right)\\
Z_{{1,2}}&=-\left(a_{2}-b_{1}\right) \left(a_{1} a_{3} a_{4}-a_{1} b_{2} b_{3}-a_{3} b_{3} c_{1}-a_{4} b_{2} c_{1}+2 b_{2} b_{3} c_{1}\right)\\
Z_{{1,3}}&=b_{2} \left(a_{4}-b_{3}\right) \left(a_{2}-b_{1}\right) \left(a_{1}-c_{1}\right)\\
Z_{{1,4}}&=b_{3} \left(a_{3}-b_{2}\right) \left(a_{2}-b_{1}\right) \left(a_{1}-c_{1}\right),
\end{align*}
\begin{align*}
Z_{{2,1}}&=c_{1} \left(a_{4}-b_{3}\right) \left(a_{3}-b_{2}\right) \left(a_{2}-c_{2}\right)\\
Z_{{2,2}}&=-\left(a_{2}-c_{2}\right) \left(a_{1} a_{3} a_{4}-a_{1} b_{2} b_{3}-a_{3} b_{3} c_{1}-a_{4} b_{2} c_{1}+2 b_{2} b_{3} c_{1}\right)\\
Z_{{2,3}}&=-\left(a_{3}-b_{2}\right) \left(a_{1} a_{2} a_{4}-a_{1} b_{3} c_{3}-a_{2} b_{3} c_{1}-a_{4} b_{1} c_{1}+b_{1} b_{3} c_{1}+b_{3} c_{1} c_{3}\right)\\
Z_{{2,4}}&=b_{3} \left(a_{3}-b_{2}\right) \left(a_{2}-c_{2}\right) \left(a_{1}-c_{1}\right),
\end{align*}
and
\begin{align*}
Z_{{3,1}}&=c_{1} \left(a_{4}-b_{3}\right) \left(a_{3}-b_{2}\right) \left(a_{2}-c_{3}\right)\\
Z_{{3,2}}&=-\left(a_{2}-c_{3}\right) \left(a_{1} a_{3} a_{4}-a_{1} b_{2} b_{3}-a_{3} b_{3} c_{1}-a_{4} b_{2} c_{1}+2 b_{2} b_{3} c_{1}\right)\\
Z_{{3,3}}&=b_{2} \left(a_{4}-b_{3}\right) \left(a_{2}-c_{3}\right) \left(a_{1}-c_{1}\right)\\
Z_{{3,4}}&=-\left(a_{4}-b_{3}\right) \left(a_{1} a_{2} a_{3}-a_{1} b_{2} c_{2}-a_{2} b_{2} c_{1}-a_{3} b_{1} c_{1}+b_{1} b_{2} c_{1}+b_{2} c_{1} c_{2}\right).
\end{align*}
The special pairs of indices of rows of $\mathbf{A}$ can be interpreted as edges of a tree, which we will do in the next section.
\section{Connection to trees}
To each of the above $n$-component Lotka-Volterra systems we associate a free (unrooted) tree $T$ on $n$ vertices as follows. The tree has an edge between vertex $i$ and vertex $k$ if the condition that $A_{i,j}=A_{k,j}$ for all $j\in\{i,k\}^{\rm c}$ is satisfied. Thus, the systems \eqref{A2}, \eqref{A3}, \eqref{A4} and \eqref{A4B} relate to the trees depicted in Figure \ref{trees}.
\begin{figure}[h]
\begin{center}
\scalebox{.75}{
\begin{tikzpicture}
	\node[shape=circle,draw=black,line width=0.5mm] (1) at (-2,0) {1};
	\node[shape=circle,draw=black,line width=0.5mm] (2) at (-2,2) {2};
	\path [-,line width=0.5mm, ] (1) edge node[left] {$1$} (2);
	\node at (-2,-0.5)[below]{$\tau_1$};
	
	\node[shape=circle,draw=black,line width=0.5mm] (1) at (1,0) {1};
	\node[shape=circle,draw=black,line width=0.5mm] (2) at (1,2) {2};
	\node[shape=circle,draw=black,line width=0.5mm] (3) at (1,4) {3};
	\path [-,line width=0.5mm, ] (1) edge node[left] {$1$} (2);
	\path [-,line width=0.5mm, ] (2) edge node[left] {$2$} (3);
	\node at (1,-0.5)[below]{$\tau_2$};
	
	\node[shape=circle,draw=black,line width=0.5mm] (1) at (4,0) {1};
	\node[shape=circle,draw=black,line width=0.5mm] (2) at (4,2) {2};
	\node[shape=circle,draw=black,line width=0.5mm] (3) at (4,4) {3};
	\node[shape=circle,draw=black,line width=0.5mm] (4) at (4,6) {4};
	\path [-,line width=0.5mm, ] (1) edge node[left] {$1$} (2);
	\path [-,line width=0.5mm, ] (2) edge node[left] {$2$} (3);
	\path [-,line width=0.5mm, ] (3) edge node[left] {$3$} (4);
	\node at (4,-0.5)[below]{$\tau_3$};
	
	\node[shape=circle,draw=black,line width=0.5mm] (1) at (8,0) {1};
	\node[shape=circle,draw=black,line width=0.5mm] (2) at (8,2) {2};
	\node[shape=circle,draw=black,line width=0.5mm] (3) at (7,4) {3};
	\node[shape=circle,draw=black,line width=0.5mm] (4) at (9,4) {4};
	\path [-,line width=0.5mm, ] (1) edge node[left] {$1$} (2);
	\path [-,line width=0.5mm, ] (2) edge node[left] {$2$} (3);
	\path [-,line width=0.5mm, ] (2) edge node[right] {$3$} (4);
	\node at (8,-0.5)[below]{$\tau_4$};
\end{tikzpicture}
}
\caption{\label{trees} The trees connected to the Lotka-Volterra systems \eqref{A2}, \eqref{A3}, \eqref{A4} and \eqref{A4B} (from left to right).}
\end{center}
\end{figure}
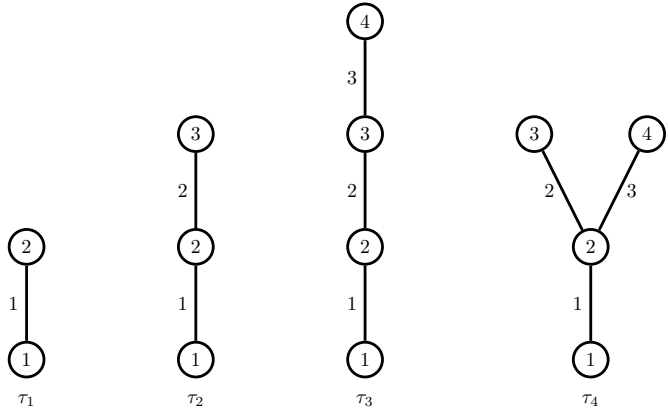

Vice versa, a tree $T$ on $n$ (ordered) vertices has $n-1$ (ordered) edges. We associated to $T$ a matrix $\A$ as follows. We start with an $n\times n$ diagonal matrix $\A$, with $A_{i,i}=a_i$. Then for each edge of $T$ we fix two off-diagonal entries of $\A$ as follows. For the $m$-th edge of the graph $T$, $e_m=(i,k)$ with $i<k$, we set $A_{i,k}=b_m$ and $A_{k,i}=c_m$. In \cite{KQM} we show that the remaining entries of the matrix $A$ are uniquely determined by the condition that $A_{i,j}=A_{k,j}$ when $(i,k)$ is an edge of $T$ and $j\in\{i,k\}^{\rm c}$. The matrix $\A$ has $3n-2$ free parameters and defines a Lotka-Volterra system
\begin{equation} \label{LVs}
\dot{x}_i=x_i\sum_{j=1}^nA_{i,j}x_j,\qquad i=1,2,\ldots,n,
\end{equation}
with $n-1$ integrals. In \cite{KQM}, we prove their functional independence, Theorem \ref{tfi}.
\begin{theorem} \label{tfi}
Each tree on $n$ vertices gives rise to a Lotka-Volterra system with $3n-2$ parameters, which admits $n-1$ functionally independent integrals.
\end{theorem}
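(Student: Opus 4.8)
The plan is to separate the statement into its two assertions: the parameter count, and the functional independence of the $n-1$ integrals produced by the method of Section~\ref{rgm}. The parameter count is immediate: by construction $\A$ carries $n$ free diagonal entries $a_1,\dots,a_n$ together with two free entries $b_m,c_m$ for each of the $n-1$ edges of $T$, and (granting the well-definedness of the remaining entries proved in \cite{KQM}) no others, giving $n+2(n-1)=3n-2$ parameters. For each edge $e_m=(i_m,k_m)$ the construction guarantees condition (\ref{cond1}), condition (\ref{cond2}) is vacuous in the homogeneous case, and (\ref{cond3}) fixes $P_{e_m}=\alpha_m x_{i_m}+\beta_m x_{k_m}$ with $\alpha_m=A_{k_m,i_m}-A_{i_m,i_m}$ and $\beta_m=A_{k_m,k_m}-A_{i_m,k_m}$; on the dense open set where $|\A|\neq0$ and every $\alpha_m\beta_m\neq0$, Lemma~\ref{Leminhom} applies and the method yields the $n-1$ integrals $I_m=P_{e_m}^{|\A|}\prod_j x_j^{Z_{m,j}}$. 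It remains to prove these are functionally independent there.

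Since each $I_m$ is nonzero on a dense open set, functional independence is equivalent to the pointwise linear independence of the differentials $dJ_m$, where $J_m=\log I_m=|\A|\log P_{e_m}+\sum_j Z_{m,j}\log x_j$. Writing $\xi_j=d\log x_j$ and $\eta_m=d\log P_{e_m}$ and using $dx_j=x_j\xi_j$, I would record the key local identity $\eta_m=s_m\xi_{i_m}+(1-s_m)\xi_{k_m}$ with $s_m=\alpha_m x_{i_m}/P_{e_m}$, so that in the generically valid basis $\xi_1,\dots,\xi_n$ of the cotangent space the Jacobian of $(J_1,\dots,J_{n-1})$ is the $(n-1)\times n$ matrix $M$ with rows $M_{m,\cdot}=|\A|\bigl(s_m\mathbf{e}_{i_m}^{\mathrm T}+(1-s_m)\mathbf{e}_{k_m}^{\mathrm T}\bigr)+\Z_{m,\cdot}$.

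The heart of the argument is to recognise the tree's incidence matrix inside $M$. Right-multiplying by the invertible $\A$ preserves rank, and using the defining relation (\ref{defZ}), equivalently $\Z\A=-|\A|\B$, together with the explicit row structure $\B_{m,\cdot}=\A_{i_m,\cdot}+\beta_m\mathbf{e}_{k_m}^{\mathrm T}$ of the cofactor matrix, a short computation collapses $M\A$ to $|\A|\,D_\beta\,(TG\A-S)$. Here $G$ is the oriented vertex–edge incidence matrix of $T$ (row $m$ has $-1$ in column $i_m$ and $+1$ in column $k_m$), $S$ selects row $k_m$, $T=\mathrm{diag}(t_m)$ with $t_m=x_{k_m}/P_{e_m}$, and $D_\beta=\mathrm{diag}(\beta_m)$ is invertible; hence $\mathrm{rank}\,M=\mathrm{rank}(TG\A-S)$.

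Finally I would exploit that $T$ is a tree. Its incidence matrix $G$ has rank $n-1$, so $G\A$ has rank $n-1$ and some column $j_0$ can be deleted to leave an invertible $(n-1)\times(n-1)$ block. The corresponding maximal minor of $TG\A-S$ is, by multilinearity in the rows, a polynomial in $(t_1,\dots,t_{n-1})$ whose coefficient of the monomial $t_1\cdots t_{n-1}$ is $\det\bigl((G\A)^{(j_0)}\bigr)\neq0$, so the minor does not vanish identically. Because a tree contains no cycle, the edge quantities $t_m$ (each a nonconstant function of the ratio $x_{i_m}/x_{k_m}$) vary independently as $\mathbf{x}$ ranges over phase space, so this minor is nonzero at generic $\mathbf{x}$ and $\mathrm{rank}\,M=n-1$ there, which is the desired functional independence. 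I expect the main obstacle to be the bookkeeping of the reduction $M\A\mapsto TG\A-S$ and, conceptually, the appeal to the full rank of $G$, which is exactly where the hypothesis that $T$ is a tree (connected and acyclic) is used in an essential way.
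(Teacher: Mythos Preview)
The paper does not actually prove Theorem~\ref{tfi} here: immediately after the statement it says ``In \cite{KQM}, we prove their functional independence,'' and no argument is supplied. So there is no in-paper proof to compare against; your proposal is strictly more than what the paper provides.

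On its own merits your argument is sound. The parameter count is correct, and the key computation is right: with $\B_{m,\cdot}=\A_{i_m,\cdot}+\beta_m\mathbf{e}_{k_m}^{\mathrm T}$ and $\Z\A=-|\A|\B$, the row reduction $M\A=|\A|D_\beta(TG\A-S)$ holds exactly as you state, reducing rank of the Jacobian to rank of a perturbation of the (weighted) incidence matrix $G\A$. The leading-term argument via multilinearity is also correct. The only places I would tighten are: (i) make explicit that this proves independence for \emph{generic} parameter values (you work on the dense open set where $|\A|\neq0$ and $\alpha_m\beta_m\neq0$), which is the natural reading of the theorem; and (ii) the last step, where you pass from ``nonzero polynomial in $(t_1,\dots,t_{n-1})$'' to ``nonzero at generic $\mathbf{x}$.'' Your tree-based justification is morally right---because $T$ is acyclic the $n-1$ edge ratios $x_{i_m}/x_{k_m}$ are algebraically independent coordinates on projectivised $\mathbf{x}$-space, and each $t_m$ is a fractional linear (hence nonconstant, since $\alpha_m\neq0$) function of its edge ratio---but it would be worth stating this explicitly rather than leaving it at ``vary independently.'' With that small clarification your proof is complete and self-contained, and does not rely on \cite{KQM}.
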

One can think of the parameters $a_i$, $b_j$, $c_k$ as weights in a complete digraph $D$ (allowing both loops and multiple edges) which is associated to $T$. The matrix $A$ is then nothing but the adjacency matrix of $D$. The connection between Lotka-Volterra systems and graphs, via the adjacency matrix of the graph, has been made before \cite{Bog, DFO, Dam, EKV}, but in the context of undirected or directed graphs, and (mainly) anti-symmetric (and hence Hamiltonian) Lotka-Volterra systems. The general setting of complete digraphs seems to be new. Note that the number of trees is given by the sequence \cite[A000055]{OEIS}.

\section{Superintegrable 5-component Lotka-Volterra systems}
There are 3 non-isomorphic trees on 5 vertices, see Figure \ref{nitr}.
\begin{figure}[h]
\begin{center}
\scalebox{.75}{
	\begin{tikzpicture}				
		\node[shape=circle,draw=black,line width=0.5mm] (1) at (0,0) {1};
		\node[shape=circle,draw=black,line width=0.5mm] (2) at (0,2) {2};
		\node[shape=circle,draw=black,line width=0.5mm] (3) at (0,4) {3};
		\node[shape=circle,draw=black,line width=0.5mm] (4) at (0,6) {4};
		\node[shape=circle,draw=black,line width=0.5mm] (5) at (0,8) {5};		
		\path [-,line width=0.5mm, ] (1) edge node[left] {$1$} (2);
		\path [-,line width=0.5mm, ] (2) edge node[left] {$2$} (3);
		\path [-,line width=0.5mm, ] (3) edge node[left] {$3$} (4);
		\path [-,line width=0.5mm, ] (4) edge node[left] {$4$} (5);
		\node at (0,-0.5)[below]{$\tau_5$};
				
		\node[shape=circle,draw=black,line width=0.5mm] (1) at (4,0) {1};
		\node[shape=circle,draw=black,line width=0.5mm] (2) at (4,2) {2};
		\node[shape=circle,draw=black,line width=0.5mm] (3) at (4,4) {3};
		\node[shape=circle,draw=black,line width=0.5mm] (4) at (3,6) {4};
		\node[shape=circle,draw=black,line width=0.5mm] (5) at (5,6) {5};
		\path [-,line width=0.5mm, ] (1) edge node[left] {$1$} (2);
		\path [-,line width=0.5mm, ] (2) edge node[left] {$2$} (3);		
		\path [-,line width=0.5mm, ] (3) edge node[left] {$3$} (4);
		\path [-,line width=0.5mm, ] (3) edge node[left] {$4$} (5);
		\node at (4,-0.5)[below]{$\tau_6$};
		
		\node[shape=circle,draw=black,line width=0.5mm] (1) at (8,0) {1};
		\node[shape=circle,draw=black,line width=0.5mm] (2) at (8,2) {2};
		\node[shape=circle,draw=black,line width=0.5mm] (3) at (6.5,4) {3};
		\node[shape=circle,draw=black,line width=0.5mm] (4) at (9.5,4) {4};
		\node[shape=circle,draw=black,line width=0.5mm] (5) at (8,4) {5};
		\path [-,line width=0.5mm, ] (1) edge node[left] {$1$} (2);
		\path [-,line width=0.5mm, ] (2) edge node[left] {$2$} (3);
		\path [-,line width=0.5mm, ] (2) edge node[right] {$3$} (4);
		\path [-,line width=0.5mm, ] (2) edge node[right] {$4$} (5);
		\node at (8,-0.5)[below]{$\tau_7$};
	\end{tikzpicture}
}
\caption{\label{nitr} These are the three non-isomorphic trees on 5 vertices.}
\end{center}
\end{figure}
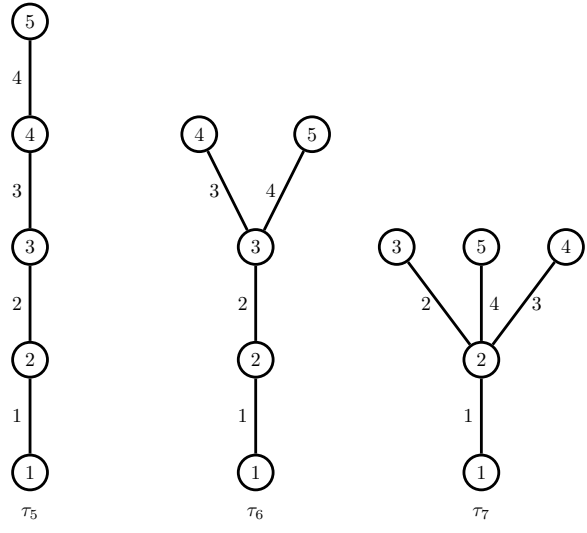
Following the procedure in the previous subsection, the trees in Figure \ref{nitr} give rise to matrices ($\mathbf{A}$)
\begin{equation} \label{AA}
\begin{pmatrix}
a_{1} & b_{1} & b_{2} & b_{3} & b_{4}
\\
 c_{1} & a_{2} & b_{2} & b_{3} & b_{4}
\\
 c_{1} & c_{2} & a_{3} & b_{3} & b_{4}
\\
 c_{1} & c_{2} & c_{3} & a_{4} & b_{4}
\\
 c_{1} & c_{2} & c_{3} & c_{4} & a_{5}
\end{pmatrix},\
\begin{pmatrix}
a_{{1}}&b_{{1}}&b_{{2}}&b_{{3}}&b_{{4}}\\
c_{{1}}&a_{{2}}&b_{{2}}&b_{{3}}&b_{{4}}\\
c_{{1}}&c_{{2}}&a_{{3}}&b_{{3}}&b_{{4}}\\
c_{{1}}&c_{{2}}&c_{{3}}&a_{{4}}&b_{{4}}\\
c_{{1}}&c_{{2}}&c_{{4}}&b_{{3}}&a_{{5}}
\end{pmatrix},\
\begin{pmatrix}
a_{{1}}&b_{{1}}&b_{{2}}&b_{{3}}&b_{{4}}\\
c_{{1}}&a_{{2}}&b_{{2}}&b_{{3}}&b_{{4}}\\
c_{{1}}&c_{{2}}&a_{{3}}&b_{{3}}&b_{{4}}\\
c_{{1}}&c_{{3}}&b_{{2}}&a_{{4}}&b_{{4}}\\
c_{{1}}&c_{{4}}&b_{{2}}&b_{{3}}&a_{{5}}
\end{pmatrix},
\end{equation}
and hence to Lotka-Volterra systems, each with 13 free parameters,
\begin{equation} \label{LV50}
\left\{
\begin{array}{c}
\dot{x}_1=x_{1} \left(a_{1} x_{1}+b_{1} x_{2}+b_{2} x_{3}+b_{3} x_{4}+b_{4} x_{5}\right)\\
\dot{x}_2=x_{2} \left(a_{2} x_{2}+b_{2} x_{3}+b_{3} x_{4}+b_{4} x_{5}+c_{1} x_{1}\right)\\
\dot{x}_3=x_{3} \left(a_{3} x_{3}+b_{3} x_{4}+b_{4} x_{5}+c_{1} x_{1}+c_{2} x_{2}\right)\\
\dot{x}_4=x_{4} \left(a_{4} x_{4}+b_{4} x_{5}+c_{1} x_{1}+c_{2} x_{2}+c_{3} x_{3}\right)\\
\dot{x}_5=x_{5} \left(a_{5} x_{5}+c_{1} x_{1}+c_{2} x_{2}+c_{3} x_{3}+c_{4} x_{4}\right),
\end{array}\right.
\end{equation}
\begin{equation} \label{LV51}
\left\{
\begin{array}{c}
\dot{x}_1=x_{{1}} \left( a_{{1}}x_{{1}}+b_{{1}}x_{{2}}+b_{{2}}x_{{3}}+b_{{3}}x_
{{4}}+b_{{4}}x_{{5}} \right)\\
\dot{x}_2=x_{{2}} \left( a_{{2}}x_{{2}}+b_{{2}}x_{
{3}}+b_{{3}}x_{{4}}+b_{{4}}x_{{5}}+c_{{1}}x_{{1}} \right)\\
\dot{x}_3=x_{{3}}
 \left( a_{{3}}x_{{3}}+b_{{3}}x_{{4}}+b_{{4}}x_{{5}}+c_{{1}}x_{{1}}+c_
{{2}}x_{{2}} \right)\\
\dot{x}_4=x_{{4}} \left( a_{{4}}x_{{4}}+b_{{4}}x_{{5}}+c_{
{1}}x_{{1}}+c_{{2}}x_{{2}}+c_{{3}}x_{{3}} \right)\\
\dot{x}_5=x_{{5}} \left( a_{{
5}}x_{{5}}+b_{{3}}x_{{4}}+c_{{1}}x_{{1}}+c_{{2}}x_{{2}}+c_{{4}}x_{{3}}
 \right)
\end{array}\right.
\end{equation}
and
\begin{equation} \label{LV52}
\left\{
\begin{array}{c}
\dot{x}_1=x_{{1}} \left( a_{{1}}x_{{1}}+b_{{1}}x_{{2}}+b_{{2}}x_{{3}}+b_{{3}}x_
{{4}}+b_{{4}}x_{{5}} \right)\\
\dot{x}_2=x_{{2}} \left( a_{{2}}x_{{2}}+b_{{2}}x_{
{3}}+b_{{3}}x_{{4}}+b_{{4}}x_{{5}}+c_{{1}}x_{{1}} \right)\\
\dot{x}_3=x_{{3}}
 \left( a_{{3}}x_{{3}}+b_{{3}}x_{{4}}+b_{{4}}x_{{5}}+c_{{1}}x_{{1}}+c_
{{2}}x_{{2}} \right)\\
\dot{x}_4=x_{{4}} \left( a_{{4}}x_{{4}}+b_{{2}}x_{{3}}+b_{
{4}}x_{{5}}+c_{{1}}x_{{1}}+c_{{3}}x_{{2}} \right)\\
\dot{x}_5=x_{{5}} \left( a_{{
5}}x_{{5}}+b_{{2}}x_{{3}}+b_{{3}}x_{{4}}+c_{{1}}x_{{1}}+c_{{4}}x_{{2}}
 \right).
\end{array}\right.
\end{equation}

Using the methods explained in sections 2 and 3, we can construct $4$ functionally independent integrals for each of these systems. As in section 4, the exponents in the integrals exhibit interesting factorisation properties. Below we provide the integrals for systems (\ref{LV50}), (\ref{LV51}) and (\ref{LV52}), expressing each exponent as a product of differences of parameters and a minor of $\mathbf{A}$. We let $\mathbf{A}^{I;J}$ denote the matrix $\mathbf{A}$ with rows $i\in I$ and columns $j\in J$ deleted. Its determinant $|\mathbf{A}^{I;J}|$ is called a minor of $\mathbf{A}$.

The Lotka-Volterra system (\ref{LV50}) admits the four functionally independent integrals
\begin{align*}
I_1&=\left(\left(c_{1}-a_{1}\right) x_{1}+\left(a_{2}-b_{1}\right) x_{2}\right)^{|\mathbf{A}|}x_1^{(c_{1}-a_{1})|\mathbf{A}^{1;1}|}x_2^{(b_{1}-a_{2})|\mathbf{A}^{2;2}|}
x_3^{\left(a_{2}-b_{1}\right)\left(a_{1}-c_{1}\right)|\mathbf{A}^{2,3;1,2}|}\\
 & \ \ \ \ \
x_4^{\left(a_{3}-b_{2}\right) \left(a_{2}-b_{1}\right) \left(a_{1}-c_{1}\right)|\mathbf{A}^{2,3,4;1,2,3}|}
x_5^{ \left(a_{4}-b_{3}\right) \left(a_{3}-b_{2}\right) \left(a_{2}-b_{1}\right) \left(a_{1}-c_{1}\right)b_{4}}\\
I_2&=\left(\left(c_{2}-a_{2}\right) x_{2}+\left(a_{3}-b_{2}\right) x_{3}\right)^{|\mathbf{A}|}
x_1^{\left(a_{5}-b_{4}\right) \left(a_{4}-b_{3}\right) \left(a_{3}-b_{2}\right) \left(a_{2}-c_{2}\right)c_{1}}
x_2^{(c_{2}-a_{2})|\mathbf{A}^{2;2}|}
x_3^{(b_{2}-a_{3})|\mathbf{A}^{3;3}|}\\
 & \ \ \ \ \
x_4^{\left(a_{3}-b_{2}\right) \left(a_{2}-c_{2}\right) \left(a_{1}-c_{1}\right)|\mathbf{A}^{1,3,4;1,2,3}|}x_5^{ \left(a_{4}-b_{3}\right) \left(a_{3}-b_{2}\right) \left(a_{2}-c_{2}\right) \left(a_{1}-c_{1}\right)b_{4}}\\
I_3&=\left(\left(c_{3}-a_{3}\right) x_{3}+\left(a_{4}-b_{3}\right) x_{4}\right)^{|\mathbf{A}|}x_1^{ \left(a_{5}-b_{4}\right) \left(a_{4}-b_{3}\right) \left(a_{3}-c_{3}\right) \left(a_{2}-c_{2}\right)c_{1} }x_2^{\left(a_{5}-b_{4}\right) \left(a_{4}-b_{3}\right) \left(a_{3}-c_{3}\right)|\mathbf{A}^{2,4,5;3,4,5}|}\\
 & \ \ \ \ \ x_3^{(c_{3}-a_{3})|\mathbf{A}^{3;3}|}x_4^{(b_{3}-a_{4})|\mathbf{A}^{4;4}|}x_5^{ \left(a_{4}-b_{3}\right) \left(a_{3}-c_{3}\right) \left(a_{2}-c_{2}\right) \left(a_{1}-c_{1}\right)b_{4}}\\
I_4&=\left(\left(c_{4}-a_{4}\right) x_{4}+\left(a_{5}-b_{4}\right) x_{5}\right)^{|\mathbf{A}|}
x_1^{\left(a_{5}-b_{4}\right) \left(a_{4}-c_{4}\right) \left(a_{3}-c_{3}\right) \left(a_{2}-c_{2}\right)c_{1} }
x_2^{\left(a_{5}-b_{4}\right) \left(a_{4}-c_{4}\right) \left(a_{3}-c_{3}\right)|\mathbf{A}^{2,3,5;3,4,5}|}\\
 & \ \ \ \ \
x_3^{\left(a_{5}-b_{4}\right) \left(a_{4}-c_{4}\right)|\mathbf{A}^{3,5;4,5}|}
x_4^{(c_{4}-a_{4})|\mathbf{A}^{4;4}|}
x_5^{(b_{4}-a_{5})|\mathbf{A}^{5;5}|}
\end{align*}
The Lotka-Volterra system (\ref{LV51}) admits the four functionally independent integrals
\begin{align*}
I_1&=\left(\left( c_{{1}}-a_{{1}} \right) x_{{1}}+ \left( a_{{2}}-b_{{1}}
 \right) x_{{2}}\right)^{|\A|}x_1^{(c_1-a_1)|\A^{1;1}|}x_2^{(b_1-a_2)|\A^{2;2}|}x_3^{(a_1-c_1)(a_2-b_1)|\A^{2,3;1,2}|}\\
 & \ \ \ \ \ x_4^{(a_1-c_1)(a_2-b_1)(a_3-b_2)(a_5-b_4)b_3}x_5^{(a_1-c_1)(a_2-b_1)(a_3-b_2)(a_4-b_3)b_4}\\
I_2&=\left( \left( c_{{2}}-a_{{2}} \right) x_{{2}}+ \left( a_{{3
}}-b_{{2}} \right) x_{{3}}\right)^{|\A|}x_1^{(a_2-c_2)(a_3-b_2)(a_4-b_3)(a_5-b_4)c_1}x_2^{(c_2-a_2)|\A^{2;2}|}x_3^{(b_2-a_3)|\A^{3,3}|}\\
 & \ \ \ \ \ x_4^{(a_1-c_1)(a_2-c_2)(a_3-b_2)(a_5-b_4)b_3}x_5^{(a_1-c_1)(a_2-c_2)(a_3-b_2)(a_4-b_3)b_4},\\
I_3&=\left(\left( c_{{3}}-a_{{3}} \right) x_{{3}}+
 \left( a_{{4}}-b_{{3}} \right) x_{{4}}\right)^{|\A|}x_1^{(a_2-c_2)(a_3-c_3)(a_4-b_3)(a_5-b_4)c_1}
 x_2^{(a_3-c_3)(a_4-b_3)(a_5-b_4)|\A^{2,4,5;3,4,5}|}\\
 & \ \ \ \ \  x_3^{(c_3-a_3)|\A^{3;3}|}x_4^{(b_3-a_4)|\A^{4;4}|}x_5^{(a_1-c_1)(a_2-c_2)(a_3-c_3)(a_4-b_3)b_4},\\
I_4&= \left( \left( c_{{4}}-a_{{3}}
 \right) x_{{3}}+ \left( a_{{5}}-b_{{4}} \right) x_{{5}}\right)^{|\A|}x_1^{(a_2-c_2)(a_3-c_4)(a_4-b_3)(a_5-b_4)c_1}
 x_2^{(a_3-c_4)(a_4-b_3)(a_5-b_4)|\A^{2,4,5;3,4,5}|}\\
 & \ \ \ \ \  x_3^{(c_4-a_3)|\A^{3;3}|}x_4^{(a_1-c_1)(a_2-c_2)(a_3-c_4)(a_5-b_4)b_3}x_5^{(b_4-a_5)|\A^{5;5}|}.
 \end{align*}
The Lotka-Volterra system (\ref{LV52}) admits the four functionally independent integrals
\begin{align*}
I_1&=\left( \left( c_{{1}}-a_{{1}} \right) x_{{1}}+ \left( a_{{2}}-b_{{1}}
 \right) x_{{2}}\right)^{|\A|}x_1^{(c_1-a_1)|\A^{1;1}|}x_2^{(b_1-a_2)|\A^{2;2}|}x_3^{(a_1-c_1)(a_2-b_1)(a_4-b_3)(a_5-b_4)b_2}\\
 & \ \ \ \ \
 x_4^{(a_1-c_1)(a_2-b_1)(a_3-b_2)(a_5-b_4)b_3}
 x_5^{(a_1-c_1)(a_2-b_1)(a_3-b_2)(a_4-b_3)b_4}, \\
I_2&=\left(   \left( c_{{2}}-a_{{2}} \right) x_{{2}}+ \left( a_{{3
}}-b_{{2}} \right) x_{{3}}\right)^{|\A|} x_1^{(a_2-c_2)(a_3-b_2)(a_4-b_3)(a_5-b_4)c_1}
x_2^{(c_2-a_2)|\A^{2;2}|}
x_3^{(b_2-a_3)|\A^{3;3}|}\\
 & \ \ \ \ \
x_4^{(a_1-c_1)(a_2-c_2)(a_3-b_2)(a_5-b_4)b_3}
x_5^{(a_1-c_1)(a_2-c_2)(a_3-b_2)(a_4-b_3)b_4}, \\
I_3&=\left(   \left( c_{{3}}-a_{{2}} \right) x_{{2}}+
 \left( a_{{4}}-b_{{3}} \right) x_{{4}}\right)^{|\A|}
 x_1^{(a_2-c_3)(a_3-b_2)(a_4-b_3)(a_5-b_4)c_1}
 x_2^{(c_3-a_2)|\A^{2;2}|}\\
 & \ \ \ \ \
 x_3^{(a_1-c_1)(a_2-c_3)(a_4-b_3)(a_5-b_4)b_2}
 x_4^{(b_3-a_4)|\A^{4;4}|}
 x_5^{(a_1-c_1)(a_2-c_3)(a_3-b_2)(a_4-b_3)b_4},\\
I_4&=\left(   \left( c_{{4}}-a_{{2}}
 \right) x_{{2}}+ \left( a_{{5}}-b_{{4}} \right) x_{{5}}\right)^{|\A|}
 x_1^{(a_2-c_4)(a_3-b_2)(a_4-b_3)(a_5-b_4)c_1}
 x_2^{(c_4-a_2)|\A^{2;2}|}\\
 & \ \ \ \ \
 x_3^{(a_1-c_1)(a_2-c_4)(a_4-b_3)(a_5-b_4)b_2}
 x_4^{(a_1-c_1)(a_2-c_4)(a_3-b_2)(a_5-b_4)b_3}
 x_5^{(b_4-a_5)|\A^{5;5}|}.
\end{align*}
The factorisation for the general case will be described in more detail in \cite{KQM}.

\section{A hierarchy of superintegrable Lotka-Volterra systems}
Consider the tall tree on $n$ vertices depicted in Figure \ref{ttree}.

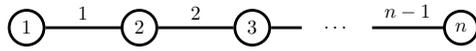
\begin{figure}[h]
\begin{center}
\scalebox{.75}{
	\begin{tikzpicture}				
	\node[shape=circle,draw=black,line width=0.5mm] (1) at (0,0) {1};
	\node[shape=circle,draw=black,line width=0.5mm] (2) at (2,0) {2};
	\node[shape=circle,draw=black,line width=0.5mm] (3) at (4,0) {3};
	\node[] (4) at (5,0) {};
	\node[] (5) at (5.5,0) {$\dots$};
	\node[] (6) at (6,0) {};
	\node[shape=circle,draw=black,line width=0.5mm] (7) at (7.7,0) {$n$};		
	\path [-,line width=0.5mm, ] (1) edge node[above] {$1$} (2);
	\path [-,line width=0.5mm, ] (2) edge node[above] {$2$} (3);
	\path [-,line width=0.5mm, ] (3) edge node[above] {} (4);
	\path [-,line width=0.5mm, ] (6) edge node[above] {$n-1$} (7);
\end{tikzpicture}}
\caption{\label{ttree} Tall tree on $n$ vertices.}
\end{center}
\end{figure}

It gives rise to the $n\times n$ matrix:
\begin{equation} \label{An}
\A=\begin{pmatrix}
a_{{1}}&b_{{1}}&b_{{2}}&b_{{3}}&\cdots&b_{n-1}\\
c_{{1}}&a_{{2}}&b_{{2}}&b_{{3}}&\cdots&b_{n-1}\\ c_{{1}}&c_{{2}}&a_{{3}}&b_{{3}}&\cdots&b_{n-1}\\
c_{{1}}&c_{{2}}&c_{{3}}&a_{{4}}&\cdots&b_{n-1}\\
\vdots&\vdots&\vdots&\vdots&\ddots&\vdots\\
c_1&c_2&c_3&c_4&\cdots&a_n
\end{pmatrix},
\end{equation}
of which matrices (\ref{A2}),(\ref{A3}),(\ref{A4}), and the left matrix in (\ref{AA}), are special cases taking $n=2,3,4$ and $5$ respectively.

For arbitrary $n$, the tall tree provides us with the Lotka-Volterra system:
\begin{equation} \label{LVn}
\left\{ \begin{split}
\dot{x}_1&=x_{1} \left(a_{1} x_{1}+b_{1} x_{2}+b_{2} x_{3}+\cdots+b_{n-1} x_{n}\right)\\
\dot{x}_2&=x_{2} \left(c_{1} x_{1}+a_{2} x_{2}+b_{2} x_{3}+\cdots+b_{n-1} x_{n}\right)\\
\dot{x}_3&=x_{3} \left(c_{1} x_{1}+c_{2} x_{2}+a_{3} x_{3}+\cdots+b_{n-1} x_{n}\right)\\
&\ \ \vdots\\
\dot{x}_{n-1}&=x_{n-1} \left(c_{1} x_{1}+c_{2} x_{2}+\cdots+ a_{n-1}x_{n-1}+ b_{n-1} x_{n}\right)\\
\dot{x}_n&=x_{n} \left(c_{1} x_{1}+c_{2} x_{2}+\cdots+ c_{n-1}x_{n-1}+ a_{n} x_{n}\right),
\end{split} \right.
\end{equation}

The $n$ coordinates $x_i$, $i=1,\ldots,n$, are Darboux polynomials. The system (\ref{LVn}) admits $n-1$ additional Darboux polynomials of the form
\[
P_{i,i+1}=\left( c_{{i}}-a_{{i}} \right) x_{{i}} + \left( a_{{i+1}}-b_{{i}} \right) x_{{i+1}},\qquad i=1,\ldots,n-1,
\]
with cofactors
\[
C_{i,i+1}=c_{1}x_{1}+\cdots+c_{i-1}x_{i-1}+a_ix_i
+a_{i+1}x_{i+1}+b_{i+1}x_{i+2}+\cdots b_{n-1}x_n.
\]

Their coefficients can be organised into the following $(n-1)\times n$ matrix:
\begin{equation} \label{Bn}
\B=\begin{pmatrix}
a_{{1}}&a_{{2}}&b_{{2}}&b_{{3}}&\cdots&b_{n-2}&b_{n-1}\\
c_{{1}}&a_{{2}}&a_{{3}}&b_{{3}}&\cdots&b_{n-2}&b_{n-1}\\ c_{{1}}&c_{{2}}&a_{{3}}&a_{{4}}&\cdots&b_{n-2}&b_{n-1}\\
\vdots&\vdots&\vdots&\vdots&\ddots&\vdots&\vdots\\
c_1&c_2&c_3&c_4&\cdots&a_{n-1}&b_{n-1}\\
c_1&c_2&c_3&c_4&\cdots&a_{n-1}&a_n
\end{pmatrix},
\end{equation}

Using the matrices $\A$ and $\B$, and defining $\Z=-\B\A^{-1}|\A|$, we obtain $n-1$ integrals of the form
\[
I_{i}=P_{i,i+1}^{|\A|}\prod_{j=1}^n x_j^{Z_{i,j}},\qquad i=1,\ldots,n-1.
\]
One can show, cf. \cite{KQM}, that the exponents factorise and that the integrals $K_i$ are functionally independent (which implies superintegrability). Introducing the notation
\[
\N_j^n=\{k\in\N:j\leq k\leq n\},
\]
we find, for all $i\in \N_1^{n-1},j\in\N_1^n$,
\[
Z_{i,j}=\begin{cases}
(a_i-c_i)\prod_{j<k<i}(a_k-c_k)\prod_{i<k\leq n}(a_k-b_{k-1})|\A^{\N_{j}^{i-1}\cap\N_{i+1}^n;\N_{j+1}^n}|&j<i,\\
(c_i-a_i)|\A^{i;i}|&j=i,\\
(b_i-a_{i+1})|\A^{i+1;i+1}|&j=i+1,\\
(a_i-c_i)\prod_{1<k<i}(a_k-c_k)\prod_{i<k< j}(a_k-b_{k-1})|\A^{\N_{1}^{i-1}\cap\N_{i+1}^j;\N_{1}^{j-1}}|&j>i+1.
\end{cases}
\]
This formula provides a more efficient way to calculate the exponents in the integrals $I_i$ than using the definition of $\Z$, which involves matrix multiplication, inversion and taking the determinant of an $n\times n$ matrix.

The special case $a_i=0 \quad (i=1,\dots,n), b_i = -c_{i+1} \quad (i=1,\dots,n-1)$ was studied in \cite{KQV}.

\bigskip

\noindent
{\bf Concluding remark.}
In this paper we have studied superintegrable Lotka-Volterra systems without imposing any additional structure. We intend to investigate the role of measure-preservation and symplectic structure on Lotka-Volterra equations in future work.

\medskip \noindent {\bf Acknowledgement} GRWQ is grateful to Silvia Perez Cruz for alleviating the plague years and to Sydney Mathematical Research Institute (SMRI) for travel support.

\appendix

\section{The Euler top}
The Euler top, in the form
\begin{equation} \label{LVE}
\left\{ \begin{split}
\dot{x}_1&=a^2x_2x_3 \\
\dot{x}_2&=b^2x_1x_3 \\
\dot{x}_3&=c^2x_1x_2,
\end{split}\right.
\end{equation}
admits the 6 Darboux polynomials
\[
cx_1\pm ax_3,\qquad
cx_2\pm bx_3,\qquad
bx_1\pm ax_2.
\]
Hence, it is linearly equivalent to an LV system with 3 additional Darboux polynomials.
In terms of $\mathbf{y}=(cx_1 + ax_3,cx_2+ bx_3,bx_1+ ax_2)/2$, we have
\begin{equation} \label{ET}
\dot{y}_i=y_i\sum_{j=1}^nA_{i,j}y_j,\qquad i=1,2,3,
\qquad \mathbf{A}=
\begin{pmatrix}
-b & a & c \\
b & -a & c \\
b & a & -c
\end{pmatrix},
\end{equation}
which is a special case of (\ref{LV3}). We note that the corresponding graph, see Figure \ref{nat}, is not a tree.

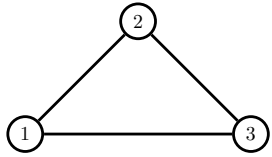
\begin{figure}[h]
\begin{center}
\scalebox{.75}{\begin{tikzpicture}
	\node[shape=circle,draw=black,line width=0.5mm] (1) at (0,0) {1};
	\node[shape=circle,draw=black,line width=0.5mm] (2) at (2,2) {2};
	\node[shape=circle,draw=black,line width=0.5mm] (3) at (4,0) {3};
	\path [-,line width=0.5mm, ] (1) edge node[above] {} (2);
    \path [-,line width=0.5mm, ] (2) edge node[above] {} (3);
    \path [-,line width=0.5mm, ] (1) edge node[above] {} (3);
\end{tikzpicture}}
\caption{\label{nat} Complete graph on $3$ vertices.}
\end{center}
\end{figure}

The additional Darboux polynomials are $ay_2-cy_3$, $by_1-cy_3$, and $ay_2-by_1$, and three integrals (not functionally independent) are given by
\[
y_1(ay_2-cy_3),\qquad y_3(by_1-ay_2),\qquad y_2(by_1-cy_3).
\]

It is now easy to generalise system (\ref{ET}) whilst keeping the same number of Darboux polynomials (six). Indeed, we would take
\[
\mathbf{A}=
\begin{pmatrix}
d & a & c \\
b & e & c \\
b & a & f
\end{pmatrix}.
\]
The corresponding LV system has additional Darboux polynomials
\[
P_1=\left( a-e \right) y_{{2}}+ \left( f-c \right) y_{{3}},\quad
P_2=\left( b-d \right) y_{{1}}+ \left( e-a \right) y_{{2}},\quad
P_3=\left( b-d \right) y_{{1}}+ \left( f-c \right) y_{{3}},
\]
cofactor coefficient matrix
\[
\B=\begin{pmatrix}
b & e & f \\
d & e & c \\
d & a & f
\end{pmatrix},
\]
and three integrals
\begin{align*}
{P_{{1}}}^{|\A|}{y_{{1}}}^{b \left( c-f \right)  \left( a-e \right) }{y_
{{2}}}^{- \left( bc-df \right)  \left( a-e \right) }{y_{{3}}}^{-
 \left( c-f \right)  \left( ab-de \right) },\\
 {P_{{2}}}^{|\A|}{y_{{1}}}^{-
 \left( b-d \right)  \left( ac-ef \right) }{y_{{2}}}^{- \left( bc-df
 \right)  \left( a-e \right) }{y_{{3}}}^{c \left( b-d \right)  \left(
a-e \right) },\\
{P_{{3}}}^{|\A|}{y_{{1}}}^{- \left( b-d \right)  \left( ac-
ef \right) }{y_{{2}}}^{a \left( c-f \right)  \left( b-d \right) }{y_{{
3}}}^{- \left( c-f \right)  \left( ab-de \right) },
\end{align*}
of which two are functionally independent.

\end{document}